\newif\ifFull
\newenvironment{proof}{\noindent{\bf Proof:}}{\hspace*{\fill}\rule{6pt}{6pt}\bigskip}
\def\@begintheorem#1#2{\sl \trivlist \item[\hskip \labelsep{\bf #1\ #2:}]}
\def\@opargbegintheorem#1#2#3{\sl \trivlist
      \item[\hskip \labelsep{\bf #1\ #2\ #3:}]}
\newtheorem{theorem}{Theorem}
\begin{document}

\title{Privacy-Enhanced Methods for\\ Comparing Compressed DNA Sequences}

\ifFull
\author{
David Eppstein \\[5pt]
{Dept.~of Computer Science} \\ 
{University of California, Irvine} \\
{Irvine, California 92697-3435 USA} \\[3pt]
{eppstein(at)ics.uci.edu}
\and
Michael T. Goodrich \\[5pt]
{Dept.~of Computer Science} \\ 
{University of California, Irvine} \\
{Irvine, California 92697-3435 USA} \\[3pt]
{goodrich(at)ics.uci.edu}
\and
Pierre Baldi \\[5pt]
{Dept.~of Computer Science} \\ 
{University of California, Irvine} \\
{Irvine, California 92697-3435 USA} \\[3pt]
{pfbaldi(at)ics.uci.edu}
}
\else
\numberofauthors{3}
\author{
\alignauthor David Eppstein \\[5pt]
\affaddr{Dept.~of Computer Science} \\ 
\affaddr{University of California, Irvine} \\
\affaddr{Irvine, California 92697-3435 USA} \\[3pt]
\email{eppstein(at)ics.uci.edu}
\alignauthor Michael T. Goodrich \\[5pt]
\affaddr{Dept.~of Computer Science} \\ 
\affaddr{University of California, Irvine} \\
\affaddr{Irvine, California 92697-3435 USA} \\[3pt]
\email{goodrich(at)ics.uci.edu}
\alignauthor Pierre Baldi \\[5pt]
\affaddr{Dept.~of Computer Science} \\ 
\affaddr{University of California, Irvine} \\
\affaddr{Irvine, California 92697-3435 USA} \\[3pt]
\email{pfbaldi(at)ics.uci.edu}
}
\fi

\date{}

\maketitle 

\begin{abstract}
In this paper, we study methods for improving the efficiency and privacy
of compressed
DNA sequence comparison computations, under various querying scenarios.
For instance, one scenario involves a querier, Bob, who wants to
test if his DNA string, $Q$, is close to a DNA string, $Y$, owned by a data
owner, Alice, but Bob does not want to reveal $Q$ to Alice and Alice
is willing to reveal $Y$ to Bob \emph{only if} it is close to $Q$.
We describe a privacy-enhanced method for comparing two compressed
DNA sequences, which can be used to achieve the goals of such a scenario.
Our method involves a reduction to set differencing, and
we describe a privacy-enhanced protocol for set differencing that
achieves absolute privacy for Bob (in the information theoretic
sense), and a quantifiable degree of privacy 
protection for Alice.
One of the important features of our protocols, which makes them
ideally suited to privacy-enhanced DNA sequence comparison problems,
is that the communication complexity of our solutions is proportional
to a threshold that bounds the cardinality of the set differences that
are of interest, rather than the cardinality of the sets involved (which
correlates to the length of the DNA sequences).
Moreover, in our protocols, the querier, Bob, can easily compute the 
set difference only if its cardinality is close to or below a
specified threshold.
\end{abstract}

\ifFull\else
%\category{F.2.0}{Analysis of Algorithms and Problem Complexity}{General}
\category{K.6.5}{Management of Computing and Information Systems}{Security
and Protection}

\terms{Algorithms, Security} 

\keywords{DNA, privacy, 
homomorphic encryption, 
invertible Bloom filters,
secure multiparty computation}
\fi

\pagestyle{plain}
\def\thepage{\arabic{page}}

\section{Introduction}
\medskip\noindent
It is hard to imagine the meaning of privacy in a world where someone
can download a digital copy of your DNA.
Given this data and your identity, such a person
could determine who you are, your ethnic heritage, and what 
diseases you are likely to inherit. 
There are already significant
concerns that employers or insurers will use genetic information to screen 
those at high risk for various diseases, and
several governments
have already created laws dealing with DNA data access.
Thus, there is a need for technologies that can safeguard the privacy 
and security of genomic data.
Unfortunately, existing methods for privacy-preserving data querying,
which depend on expensive cryptographic computations,
don't scale to genomic data sizes.

Indeed, the size of genomic data is not just a concern for security
primitives---the storage and handling of genomic data is already
presenting a challenge even without considering the possible
addition of cryptographic layers.
Namely, genomes are often stored as simple text 
files, but this approach is difficult to scale in many ways. 
The storage of the diploid genomes of all currently living humans
using this simple approach would be
on the order of $36 \times 10^{18}$ bytes, or 
36 exabytes, not counting likely backups and multiple copies.
And even with the progress that can be expected with increased
capacities for storage and networking in the coming years, 
it is likely that security and privacy issues will require 
additional layers of protection around genomic data.
Thus, it is quite likely that genomic data will need to be processed
and stored in compressed form.

Many companies, physicians, and law-enforcement agencies
have legitimate
reasons for querying genomic data. 
The challenge from an algorithmic
standpoint is to be able to support such queries using low storage,
bandwidth, and query times.
Moreover, as just mentioned,
it is our view
that it is quite likely that genomic data will have to be
compressed, given their uncompressed sizes.
Thus, in order to protect the privacy of genomic data,
cryptographic techniques will need to be employed on compressed
genomic data to support
privacy-preserving data querying. 
That is, cryptographic techniques should
allow for queries to be performed on compressed data
in a way that answers the specific
question---such as quantifying 
sequence matches or alignments---but does not
reveal any other information about the data, such as race or disease risk of
the individual whose DNA is being queried.
Putting these constraints together, there is a real need for
trustworthy computing
techniques that can work in conjunction with genomic compression
technologies to answer queries and perform analyses
in a way that preserves the privacy of the data and the proprietary
information contained in the queries themselves.

\subsection{Prior Related Work}
\medskip\noindent
Privacy issues for data querying have been investigated before.
Namely, the data structures presented in~\cite{mrk-zks-03, ors-ecpgqcd-04}
provide the option of a privacy-preserving verification of the answer to a
query, that is, the proofs of data consistency are carried out in a private
way, where no information other than the queried data is inferred by the
prover.  However, these privacy-preserving techniques involve
computationally expensive operations, which would be inappropriate
for genomic-scale computations.

Several researchers have 
also explored privacy-preserving data querying
methods that can be applied, in principle, to
genomic sequences (e.g., see~\cite{akd-spsc-03,da-smc-01}).
That is, cryptographic techniques 
can be used to allow for queries 
to be performed in a way that answers the specific
question---such as a score rating the quality of a query
for DNA matching or sequence alignment---but does not
reveal any other information about the data, such as race or disease risk of
the individual whose DNA is being queried.
Atallah {\it et al.}~\cite{akd-spsc-03} and
Atallah and Li~\cite{al-sosc-05}
studied
privacy-preserving protocols for edit-distance sequence comparisons,
Troncoso-Pastoriza {\it et al.}~\cite{tkc-pperd-07}
described a privacy-preserving protocol for regular-expression
searching.
Jha {\it et al.}~\cite{jks-tppgc-08} give
privacy-preserving protocols for computing edit distance and
Smith-Waterman similarity scores between two sequences,
improving the privacy-preserving 
algorithm of Szajda {\it et al.}~\cite{spol-tpdp-06}.
These previous schemes are not defined for compressed DNA sequences,
however, and they have communication costs that are as high as the
strings themselves, whereas we are interested in solutions that have
communication costs proportional to the size of the differences
between DNA sequences.

Aligned matching results between two compressed genomic strings can 
also be done in a secure way using 
privacy-preserving set-intersection protocols, and
several groups of researchers have developed such 
protocols
(e.g.,
see~\cite{ae-nepps-07,FNP04,vc-ssica-05,ss-ppsip-07,ss-ppsib-08})
or SMC methods for
computing dot products (e.g., see~\cite{dfknt-uscfm-06,gmw-hpamg-87,y-psc-82}),
which can also be used for determining set intersections.
Freedman {\it et al.}~\cite{FNP04}
give a linear-size protocol for performing a privacy-preserving 
set-intersection computation,
and several others have given privacy-preserving solutions
with linear or slightly super-linear communication costs, including
Dachman-Soled {\it et al.}~\cite{dmry-erpsi-09} and
De~Cristofaro and Tsudik~\cite{dt-ppsip-10}.
Instead of computing the contents of a set intersection, 
De~Cristofaro {\it et al.}~\cite{cryptoeprint:2011:141} give a
protocol having communication complexity proportional to the set
sizes for simply determining the cardinality of a set intersection,
and Vaidya and Clifton~\cite{vc-ssica-05} similarly give such a
privacy-preserving protocol with linear communication complexity as well.
Kissner and Song~\cite{ks-ppso-05} 
and Hazay and Nissim~\cite{hn-esopm-10} 
extend privacy-preserving set-intersection 
protocols
to perform set union as well, also with linear communication
complexity.
Ateniese {\it et al.}~\cite{adt-ismsh-11} describe a scheme that uses
communication that is linear in one of the two sets being compared,
while keeping the other size secret.

All of these families of privacy-preserving
protocols are unfortunately inefficient for privacy-enhanced
comparison of even compressed DNA sequences, however, for two reasons:
\begin{enumerate}
\item
We desire solutions that identify the differences between two sets,
not their intersection.
\item
We desire solutions whose communication complexity is proportional to
the size of the set difference, not the sizes of the sets themselves
(which will, in the DNA application, be much larger than the set
difference).
\end{enumerate}

\subsection{Our Results}
\medskip\noindent
In this paper, we describe a new data structure, called the 
\emph{Privacy-Enhanced Invertible Bloom Filter} (PIBF), and we show how it
is suited to perform efficient privacy-enhanced comparisons of
compressed DNA strings.
In particular, we consider scenarios in which a data querier,
Bob, wishes to learn the difference between a genomic sequence of
interest and the DNA strings owned by a data owner, Alice.
In each scenario we consider, Bob does not want Alice to learn the
contents of his query string and Alice does not want to reveal any of
her DNA strings to Bob unless it is close in edit distance to
Bob's string.
We also consider a scenario where
the result of a query should only be revealed
to a trusted third party, Charles.
In addition, we can generalize our solutions so that they can apply
to queries that are restricted to a range, $R$, of the genome.
We show that the PIBF data structure can be used to
answer the queries in 
each of these scenarios in a way that protects the privacy interests
of each of the parties in quantifiable ways.
Our solutions provide absolute privacy protection to Bob
(in the information theoretic sense, for our most likely of scenarios)
and involve 
a novel quantified analysis of the privacy protection provided by
the Invertible Bloom Filter~\cite{eg-sirtd-11,egs-sos-11} data structure
for Alice.

Our solutions can, in fact, apply to scenarios that go beyond the application
to genomic data, and can be used to 
provide privacy-enhanced methods for
performing general set difference operations, which may be of
independent interest.
We focus here on the application to compressed DNA sequences, however,
since it is arguably the best motivation for a privacy-enhanced
set-differencing protocol whose communication complexity is
proportional to the cardinality of the set difference rather than the
sets being compared.
Also, the computational requirements of our solutions are very low,
even with respect to constant factors, which also makes the
application to genomic data especially relevant.

\section{Genomic Data Compression}
\medskip\noindent
Since our methods are to be applied to compressed genomic sequences,
we discuss here an effective representation for such compressed
sequences~\cite{baldicompression07,baldidcc08,brandon2009data}.
Our methods for privacy-preserving comparisons work with other
compression schemes, as well, but to provide a concrete
example of genomic data compression, let us review
the approach of~\cite{baldicompression07,baldidcc08,brandon2009data}.
The essential property to note during this discussion is that the
scheme we describe
here allows us to reduce sequence comparison to a symmetric
difference operation on sets.

Such data structures allow the compression of genome sequences 
while facilitating certain classes of sequence queries.
More importantly, at least for the scope of this paper, we assume
these data structures and compressed representations
facilitate
efficient protocols for privacy-preserving genomic data querying.

\subsection{Compressing with a Reference String}
\medskip\noindent
As noted above,
in the case of storing multiple genomes from the same species, 
and in particular for humans,
whose genomes contain at least 3 billion base pairs,
the flat text file approach is clearly wasteful.
A simple, but general, approach is to store a reference sequence, 
$\cal R$,
and then for each other sequence, encode only its differences with 
respect to $\cal R$ in a canonical way.
To see what we mean by a canonical encoding of a difference with
respect to a reference string, consider the sequences
\begin{quotation}
AACGACTAGTAATTTG 
\end{quotation}
and  
\begin{quotation}
CACGTCTAGTAATGTG,
\end{quotation}
which are identical, except for substitutions
in positions 1 (A$\to$C), 5 (A$\to$T), and 14 (T$\to$G). Each 
such single nucleotide polymorphisms (SNP)
can be encoded by a pair $(i,X)$, where $i$ is an integer encoding the position and $X$ represents the value of the substitution.
Thus, given the first sequence 
as a reference, the second one can be encoded by the string 
``1C5T14G'',
or the set 
\begin{quotation}
``$\{$(1,C), (5,T), (14,G)$\}$.''
\end{quotation}
Note that with this data representation, the questions 
``Is this sequence different from the reference sequence at position $i$?''
and, if so, ``How?'' are easy to answer, given the reference string,
$\cal R$. 

Other events, such as deletions and insertions can also 
be accommodated in this scheme. 
For a deletion, we can use a pair of integers,
$(i,l)$, where $i$ denotes the index in $\cal R$ 
where the deletion occurs and $l$ represents the length of the deletion.
Likewise, for an insertion of $l$ characters, we can use the encoding
$i, X_1 \ldots X_l$ to denote the insertion of $X_1 \ldots X_l$ 
at position $i$ with respect to the reference sequence.

Thus, with this approach, we can represent any genomic sequence as a
set, of substitution, insertion, and deletion events, and we assume
that there is a canonical way of defining such a set.
In addition,
this compression scheme is just one data representations we can allow.
The essential properties of such a compression scheme is that
it provide an effective mechanism for compressing 
multiple genomic sequences taken from the same species by encoding
each genomic sequences as an indexed set of differences with a
reference string, $\cal R$.

\subsection{Some Technical Details}
\medskip\noindent
While the basic idea of genomic data compression is often easy to understand, 
precise implementations require that one addresses a number of important 
technical issues, for the sake of defining a canonical encoding that
can allow us to reduce genomic sequence comparison to set
differencing.
For instance, using the above approach as a running example, we
need to use integers that are associated with positions or 
lengths of events, 
and letters describing these events.
Our set comparison methods depend on indices that are 
absolute addresses, but a compression scheme might not use such
addresses.

For instance,
one can use local coordinates or relative addressing, i.e., intervals, 
instead of absolute addressing, and get improved data compression. 
With relative addresses, applied to the above example, the string 
``1C5T14G'' becomes ``0C4T9G'', with a similar change to the
associated set representation.
It is natural to expect such an encoding, since
the integers to be encoded may be
considerably smaller than when one uses absolute addresses.
For our comparison operations to be effective, we need to be using
absolute coordinates.
Thus,
the relatively modest price to pay is that if a genomic sequences is
encoded using relative addresses, then these addresses must be 
changed to recover absolute coordinates prior to applying our
privacy-enhanced comparison methods.

A second observation is that if the positions at which variations 
occur in the population are fixed and
form a relatively small subset of all possible positions, 
then additional savings may result by focusing only on those positions.
If in the same schematic example as above, one knew that in the 
population substitutions can only occur at positions 1, 5, and 14,
one could, for instance, encode ``1C5T14G'' by ``1C2T3G'', 
at the cost of keeping a conversion table that memorizes the 
coordinate positions where the variants occur.
In this case, these integer values form a modified type of 
absolute address; hence, they do not need to be modified in order for
our privacy-preserving comparison methods to be effective.
That is, in the genome applications we consider, 
the absolute addresses should either be explicit positional indices 
or ordinal indices with respect to known mutations in the population
(e.g., see~\cite{baldicompression07,baldidcc08,brandon2009data}).

While substitutions, deletions, and insertions are sufficient both in 
theory to describe any sequence with respect to the reference sequence 
and in practice in the case of the relatively short mitochondrial genome,
more complex events that occur frequently in more complex genomes 
can also be added to the list. 
For instance, 
\begin{quotation}
``50Ins100C2(25)'' 
\end{quotation}
could denote the insertion (Ins) 
at position 50 of a subsequence of length 100 bp 
coming from chromosome 2, and starting at location 25 on that chromosome.
Clearly the binary convention for denoting the different pieces of 
information, or their groups, must be defined to avoid both any ambiguity 
as well as the use of additional symbols such as parentheses 
or other delimiters.
While statistics on human SNPs are becoming 
available~\cite{hinds05,goldstein05,HapMap07}, 
statistics on the frequencies and location of these more complex
rearrangement events in the human genome are at an earlier stage of development,which can be used for possible improved DNA sequence compression.
(For instance, see~\cite{levy07}.)
In any case,
the approaches used in this paper can also be applied to such
compression schemes, provided
they preserve the invariant that DNA comparison amounts to a
symmetric difference between compressed DNA sequences (converted to
absolute coordinates) defined by sets of differences with a reference
string, $\cal R$.

\subsection{Data Distributions}
\medskip\noindent
As noted above,
genomic strings typically have a degree of similarity
that can be exploited for compressing such schemes. Indeed, 
we have already noted that 
compression schemes can let us view a genomic string
with respect to a scheme that represents a string in terms of its
differences with a reference string, $\cal R$
(e.g., see~\cite{baldicompression07,brandon2009data}).
That is, we can start from a reference string, $\cal R$, 
which contains the most common components of a typical genomic string.
Then we define each other string, $Q$, in a canonical way
in terms of its differences with $\cal R$.
Each difference is defined by an index location, $i$, in $\cal R$ 
and an operation
to perform at that location, such as a substitution, insertion, or deletion.

In~\cite{brandon2009data},
a comparative analysis on a set of 4,000 mtDNA sequences from different
individuals, taken from GenBank, has identified
4,544 positions along the reference mtDNA 
sequence where at least one of the other sequence deviates from the 
reference sequence, which has a total length of roughly 16,500. 
In aggregate, there were
122,266 bp that deviated from the reference sequence.
Besides substitutions, the total number of insertion 
and deletion events across all the sequences was 7,175, 
the most frequent one being
1 bp insertions (4671 occurrences), followed by 2 bp deletions (901). 
In addition, by combining the computations that determined these
statistics with machine learning algorithms for racial
characteristics, researchers have identified ethnic mtDNA deviations.
Some well known variants, 
such as the ``Asian-specific 9 bp deletion'' \cite{harihara92,thomas98}, 
also occur frequently (255 occurrences).
The take-away message from this experiment was that canonical
data compression is viable using the approach outlined above.

An important observation that already 
can be derived from this study of mtDNA is that DNA strings that have
a large edit distance will also have a large number of differences
with respect to their compressions relative to the reference string.
For instance, the distribution of the raw intervals 
for the mtDNA study
is shown in Figure \ref{fig:runlength}a. Observed intervals vary from 0 to
14,998bp, the most frequent one being an interval of 73 (2,580 occurrences), followed by
688 (2419 occurrences), and followed by 6 (2,202 occurrences).
Likewise,
the plot of the logarithm of the counts versus the logarithm of the rank 
(in decreasing order of frequency) is shown in Figure
\ref{fig:runlength}b. 
Overall these distributions are not strongly structured, with the log-log plot
demonstrating in this case at best a very weak power-law structure.

\begin{figure*}[!htb]
\begin{center}
\begin{tabular}{cc}
\includegraphics[width=3.0in]{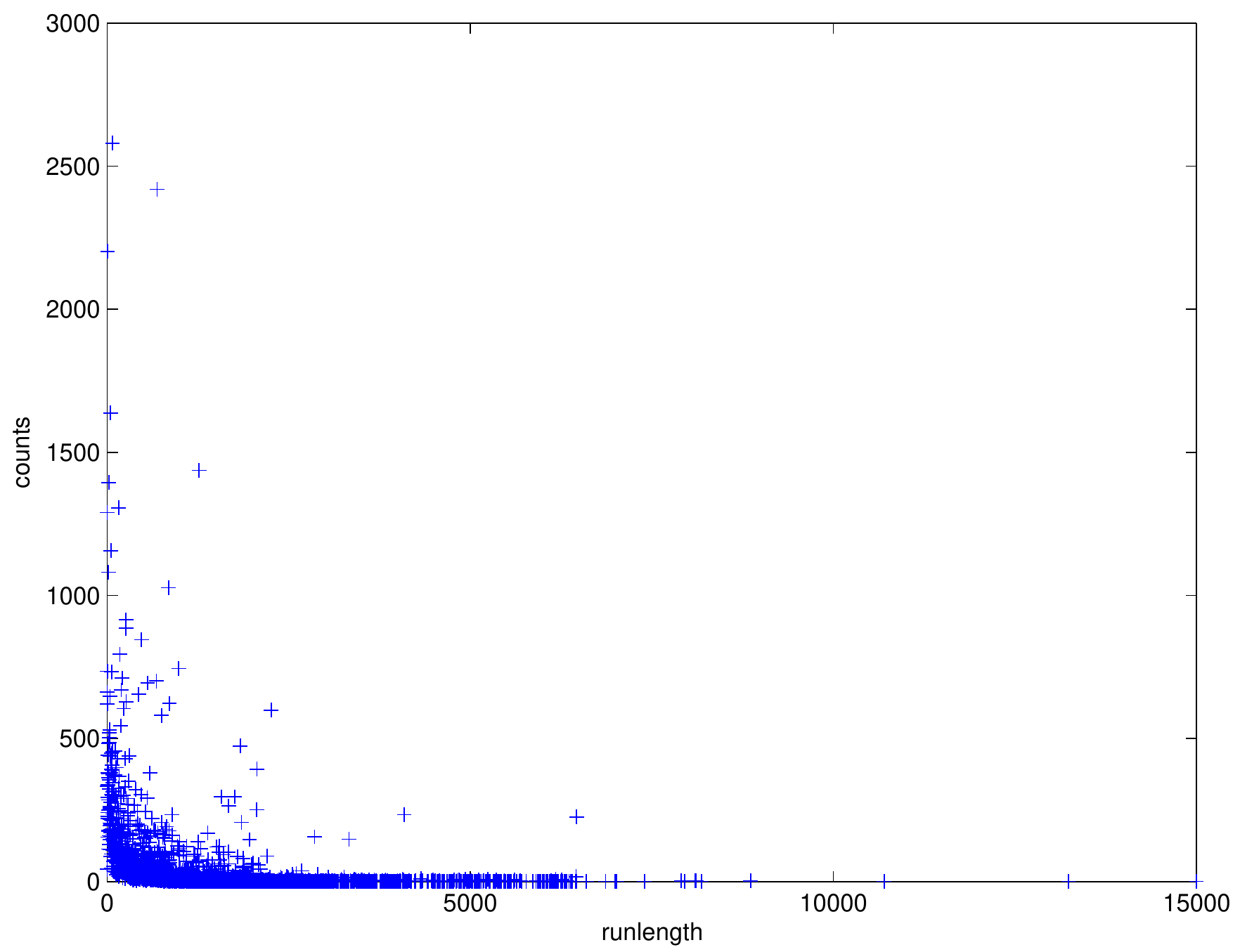}
&
\includegraphics[width=3.0in]{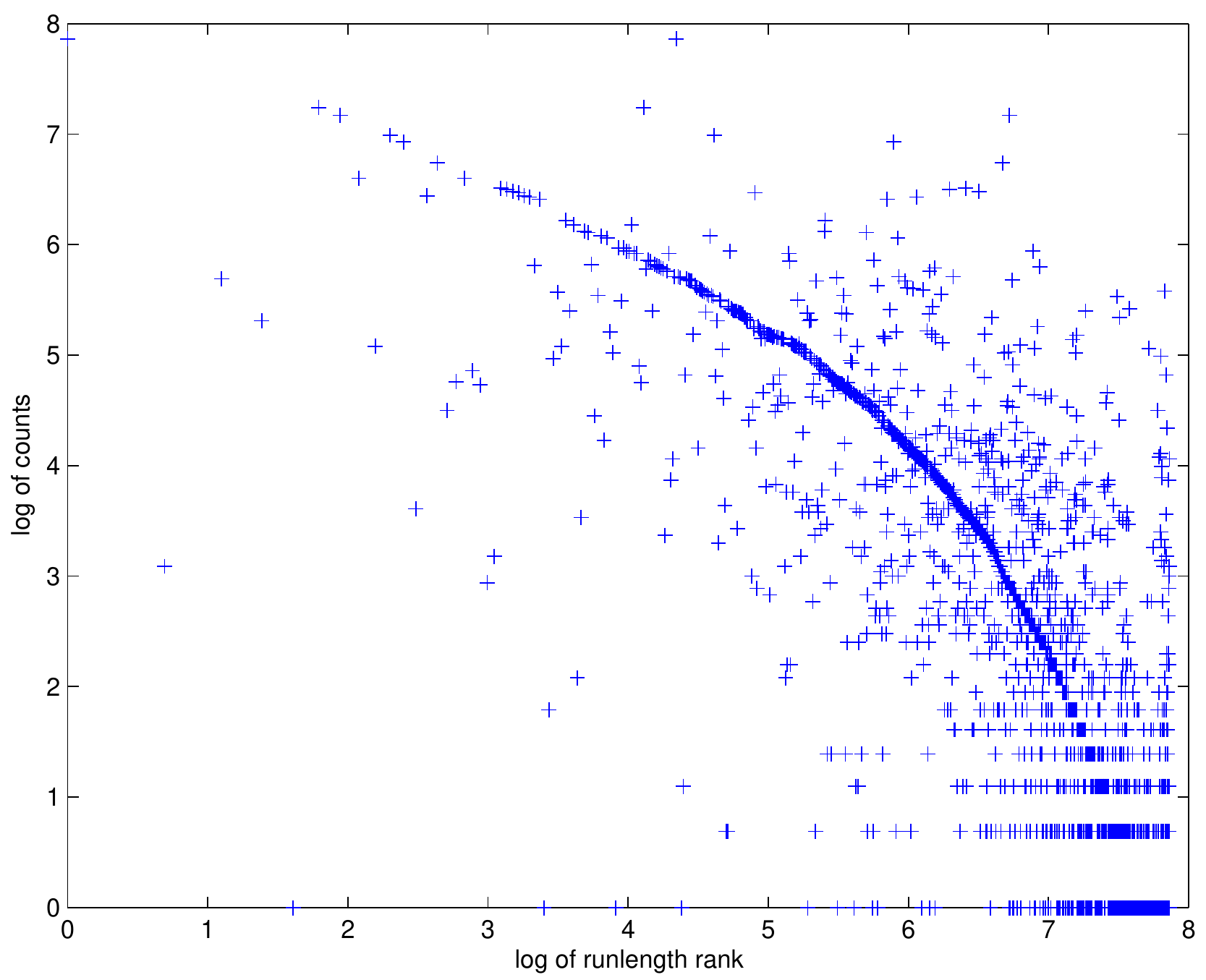}
\\
(a)
&
(b)
\end{tabular}
\end{center}
\caption{Distribution of intervals. 
(a) the $x$-axis represents the interval values and
the $y$-axis represents the corresponding counts;
(b) distribution of intervals using a log rank-log frequency plot,
with $x$-axis representing the the logarithm of the rank associated 
with decreasing interval frequencies and the
$y$-axis representing the logarithm of the corresponding counts.
}
\label{fig:runlength}
\end{figure*}

Thus, this exercise shows
that the reference sequence can be optimized 
to minimize the total number of variants.
Furthermore, the reference sequence does
not need to be a sequence from an actual individual, 
but could be designed using purely statistical and 
entropy minimization considerations.
The design of the reference sequence impacts not 
only the variants to be recorded, but also the runlengths, 
and therefore it must
also take into consideration any constraints a particular 
implementation may place on the runlengths and their encodings. 
In the case
of large genomes, for instance, it may be useful to control 
the range of possible runlengths and keep it under some reasonable value.
The only requirement that we make with respect to the comparison
algorithms we consider in this paper is that the same reference
string should be used to encode all DNA strings that will be
subjected to privacy-enhanced comparisons.
Moreover, although the number of completed human genomes is still
relatively small, the analysis 
from~\cite{baldicompression07,baldidcc08,brandon2009data}
suggests that, even though DNA
sequences have lengths of over 3 billion base pairs, their compressed
versions should be representable using sets whose cardinalities are
measured in the millions.

\section{Scenarios for Genomic Data}
\label{sec:privacy}
\medskip\noindent
To address security, integrity, and privacy issues for 
querying genomic data,
the general framework we consider comprises two kinds of basic entities:
a data owners, Alice, who controls and stores genomic data, 
and an agent, Bob, who is interested in querying Alice's data. 
The granularity of the data owner can
vary. At the smaller end of the spectrum, each individual could
be a data owner and could, for instance, carry his/her own genome on a
personal storage device. At the larger end of the spectrum, 
Alice could represent a national or
international data center with millions of genomic
sequences. 
Such a data owner, Alice, 
between these ranges could
exist for instance at the state, county, city, hospital, clinic, or
laboratory level. 
Likewise, many different agents, represented as Bob,
are conceivable, ranging from personal physicians, to
hospital, to insurance companies, to employers, to central agencies
such as the Social Security, the FBI, or the CIA.  

\subsection{Query Scenarios}
\medskip\noindent
Let us consider several possible 
query scenarios, where, in each case
Alice holds a list of $n$ entries, each entry containing a genomic
sequence from some person (possibly with $n=1$), and Bob
holds a genome of one individual:
\begin{enumerate}
\item
Bob wants to determine
if there is an entry in Alice's database that matches the genome
he holds, where a match is defined as a symmetric difference that is
below some given threshold, $\tau$.
Bob does not want to reveal his genome to Alice and Alice is willing to reveal
a genome in her database only if its symmetric difference with Bob's
genome is smaller than $\tau$ or close to $\tau$.
\item
Bob wants to determine
if there is an entry in Alice's database that matches the genome
he holds, where a match is defined as a symmetric difference that is
below some given threshold, $\tau$, or close to this threshold.
Bob does not want to reveal his genome to Alice and Alice does not
want to reveal any of her data to Bob.
However,
Alice is willing to reveal a genome in her database
to a trusted third party, Charles, but only
if its symmetric difference with Bob's
genome is smaller than $\tau$.
\item
Bob wants to determine
if there is an entry in Alice's database that matches the genome
he holds, restricted to a range, $R$, 
where a match is defined as a symmetric difference that is
below some given threshold, $\tau$, or close to $\tau$, and is in the range $R$.
Bob does not want to reveal his genome to Alice and Alice is willing to reveal
a genome in her database, restricted to $R$,
only if its symmetric difference with Bob's
genome is smaller than $\tau$ in $R$.
\end{enumerate}

We show in this paper 
how the private invertible Bloom filter, which we discuss next,
can be used in each of these scenarios.

\section{A Privacy-Enhanced IBF}
\medskip\noindent
Our results rely on the construction of a 
\emph{Privacy-enhanced Invertible Bloom Filter} (PIBF),
which extends the Invertible Bloom Filter (IBF) of Eppstein 
and Goodrich~\cite{eg-sirtd-11},
which is itself
a variant of the classic Bloom filter~\cite{bloom1970space} 
and counting Bloom filter~\cite{bmpsv-iccbf-06,fcab-scswa-00}
data structures.
An IBF is a probabilistic way of maintaining a set that 
allows both insertions and deletions. 
Moreover, one of the important features of an IBF is
that it allows the number of inserted
elements to far exceed the capacity of the data structure 
so long as most of the
inserted elements are deleted or subtracted out later. 
Second, an IBF allows the 
elements of the set to be listed back out if the capacity is not too
large. 
In addition, an IBF allows us to represent a large set using a small
IBF, and then to quickly determine the 
elements in the symmetric difference of two sets, given only their
IBF representation, as we now review.

\subsection{A Review of Invertible Bloom Filters}
\medskip\noindent
Let us review the essential components of an Invertible
Bloom Filter (IBF)~\cite{eg-sirtd-11}, as described in~\cite{egs-sos-11}.
We assume that the items being considered belong to a countable
universe, $U$, and that given any element, $x$, we can determine a
unique integer identifier for $x$ in $O(1)$ time (e.g., either by $x$'s
binary representation or through a hash function).
For instance, in the case of a compressed DNA string, $x$
could be a pair, $(i,\delta)$, where $i$ is an index of a reference
string and $\delta$ is a change with respect to that reference.

An IBF consists of a table, $T$, of $t$ cells, for a given
parameter, $t$; a set of
$k$ random hash functions, $h_1,h_2,\ldots,h_k$, which map any 
element $x$ to $k$ \emph{distinct} cells in $T$;
and a random hash function, $g$, which maps any element $x$ to a random
value in the range $[1,2^\lambda]$, where $\lambda$ is a specified
number of bits.

Each cell of $T$ contains the following fields:
\begin{itemize}
\setlength{\itemsep}{0pt}
\item
\textsf{count}: an integer count of the number of items mapped to
this cell
\item
\textsf{idSum}: a sum of all the items mapped to this
cell
\item
\textsf{gSum}: a sum of $g(x)$ values for all items mapped to this
cell.
\end{itemize}
The \textsf{gSum} field is used for checksum purposes.
An IBF supports several simple algorithms 
for item insertion, deletion, and membership queries,
as shown in Figure~\ref{fig:algs}, which is taken
from~\cite{egs-sos-11}.

\begin{figure*}[tp]
\ifFull\small\fi
\noindent
{\sf initialize}$()$:
\begin{algorithmic}[100]
\FOR {$i=0,\ldots,t-1$}
\STATE
$T[i].\mbox{\textsf{count}}\leftarrow 0$
\STATE
$T[i].\mbox{\textsf{idSum}}\leftarrow 0$
\STATE
$T[i].\mbox{\textsf{gSum}}\leftarrow 0$
\ENDFOR
\end{algorithmic}

\medskip
\noindent
{\sf insert}$(x)$:
\begin{algorithmic}[100]
\FOR {each $h_i(x)$ value, for $i=1,\ldots,k$}
\STATE
add $1$ to $T[h_i(x)].\mbox{\textsf{count}}$
\STATE
add $x$ to $T[h_i(x)].\mbox{\textsf{idSum}}$
\STATE
add $g(x)$ to $T[h_i(x)].\mbox{\textsf{gSum}}$
\ENDFOR
\end{algorithmic}

\medskip
\noindent
{\sf delete}$(x)$:
\begin{algorithmic}[100]
\FOR {each $h_i(x)$ value, for $i=1,\ldots,k$}
\STATE
subtract $1$ from $T[h_i(x)].\mbox{\textsf{count}}$
\STATE
subtract $x$ from $T[h_i(x)].\mbox{\textsf{idSum}}$
\STATE
subtract $g(x)$ from $T[h_i(x)].\mbox{\textsf{gSum}}$
\ENDFOR
\end{algorithmic}

\medskip
\noindent
{\sf isMember}$(x)$:
\begin{algorithmic}[100]
\FOR {each $h_i(x)$ value, for $i=1,\ldots,k$}
\IF {$T[h_i(x)].\mbox{\textsf{count}}= 0$ \textbf{and}
      $T[h_i(x)].\mbox{\textsf{idSum}}= 0$  \textbf{and}
      $T[h_i(x)].\mbox{\textsf{gSum}}= 0$}
\STATE 
\textbf{return} \textbf{false}
\ELSIF {$T[h_i(x)].\mbox{\textsf{count}}= 1$ \textbf{and}
      $T[h_i(x)].\mbox{\textsf{idSum}}= x$  \textbf{and}
      $T[h_i(x)].\mbox{\textsf{gSum}}= g(x)$ }
\STATE 
\textbf{return} \textbf{true}
\ENDIF
\ENDFOR
\STATE \textbf{return} ``not determined''
\end{algorithmic}

\medskip
\noindent
{\sf subtract}$(A,B,C)$:
\begin{algorithmic}[100]
\FOR {$i=0$ to $t-1$}
\STATE $T_C[i].\mbox{\textsf{count}} \leftarrow T_A[i].\mbox{\textsf{count}} - T_B[i].\mbox{\textsf{count}}$
\STATE $T_C[i].\mbox{\textsf{idSum}} \leftarrow T_A[i].\mbox{\textsf{idSum}} - T_B[i].\mbox{\textsf{idSum}}$
\STATE $T_C[i].\mbox{\textsf{gSum}} \leftarrow T_A[i].\mbox{\textsf{gSum}} - T_B[i].\mbox{\textsf{gSum}}$
\ENDFOR
\end{algorithmic}

\medskip
\noindent {\sf listItems}$()$:
\begin{algorithmic}[100]
\WHILE {there is an $i\in[1,t]$ such that 
$T[i].\mbox{\textsf{count}} = 1$ \textbf{or} $T[i].\mbox{\textsf{count}} = -1$}
\IF {$T[h_i(x)].\mbox{\textsf{count}}\,=\, 1$ 
     \textbf{and} $T[h_i(x)].\mbox{\textsf{gSum}}\,=\, 
	     g(T[h_i(x)].\mbox{\textsf{idSum}})$ }
\STATE add the item,
$(T[i].\mbox{\textsf{idSum}})$,
to the ``positive'' output list
\STATE call {\sf delete}($T[i].\mbox{\textsf{idSum}}$)
\ELSIF {$T[h_i(x)].\mbox{\textsf{count}}\,=\, -1$ 
       \textbf{and} $-T[h_i(x)].\mbox{\textsf{gSum}}\,=\, 
       g(-T[h_i(x)].\mbox{\textsf{idSum}})$ }
\STATE add the item,
$(-T[i].\mbox{\textsf{idSum}})$,
to the ``negative'' output list
\STATE call {\sf insert}($-T[i].\mbox{\textsf{idSum}}$)
\ENDIF
\ENDWHILE
\end{algorithmic}

\caption{Operations supported by an invertible Bloom filter. All
arithmetic is assume to be in $Z_p$, where $p$ is a prime number chosen
to be larger than any id, $x$, or $g(x)$ value.}
\label{fig:algs}
\end{figure*}

Note that we can take the difference of one IBF, $A$, with a table $T_A$, and
another one, $B$, with table $T_B$, to produce an IBF, $C$, with table $T_C$,
that represents their signed symmetric
difference, with the items in $A\setminus B$ having
positive signs for their cell fields in $C$ and items in $B\setminus A$ having
negative signs for their cell fields in $C$ (we assume that $C$ is initially
empty). 
Moreover,
given an IBF, which may have been produced either through
insertions and deletions or through a subtract operation,
we can list out its contents by repeatedly looking for cells with
counts of $+1$ or $-1$ and removing the items for those cells if they
pass a test for consistency.
This method therefore produces a list of items that had positive
signs and a list of items that had negative signs.
In particular, with respect to
an IBF, $C$, that is the result of a \textsf{subtract}($A,B,C$)
operation, the positive-signed elements belong to $A\setminus B$ and
the negative-signed elements belong to $B\setminus A$.
Eppstein {\it et al.}~\cite{egs-sos-11} establish the following with respect
to an IBF's ability to be used for set differencing.

\begin{theorem}[\cite{egs-sos-11}]
\label{thm:upper}
Suppose $X$ and $Y$ are sets with $m$ elements in their
symmetric difference, i.e., $m=|X\bigtriangleup Y|$, and let
$\epsilon>0$ be an arbitrary real number.
Let $A$ and $B$ be invertible Bloom filters
built from $X$ and $Y$, respectively, such that each IBF has
$\lambda\ge k+\lceil\log k\rceil$ bits in its \textsf{gSum} field, i.e.,
the range of $g$ is $[1,2^\lambda)$,
and each IBF has at least
$2km$ cells, where $k > \lceil \log (m/\epsilon)\rceil+1$ is the
number of hash functions used.
Then the \textsf{listItems} method for the IBF $C$ resulting from the
\textsf{subtract}$(A,B,C)$ method will list all $m$
elements of $X\bigtriangleup Y$ and identify which belong
to $X\setminus Y$ and which belong to $Y\setminus X$ with 
probability at least $1-\epsilon$.
\end{theorem}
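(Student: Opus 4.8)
The plan is to analyze the \textsf{listItems} peeling process on the combined IBF $C = A - B$, which by construction encodes exactly the $m$ elements of $X \bigtriangleup Y$ (each with a $\pm 1$ signature in its \textsf{count} field) using $t \ge 2km$ cells and $k$ hash functions. First I would argue that the \textsf{gSum} checksum test almost never produces a false positive: a cell with \textsf{count} equal to $\pm 1$ but holding a superposition of several elements passes the test only if a spurious collision in the range $[1,2^\lambda)$ occurs, and a union bound over the at most $tk = O(km)$ cell-checks inspected during peeling, against a range of size $2^\lambda \ge 2^{k+\lceil \log k\rceil} \ge k \cdot 2^k$, shows this happens with probability at most (roughly) $tk / 2^\lambda$, which is dwarfed by $\epsilon$ once $k > \lceil\log(m/\epsilon)\rceil + 1$. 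So with high probability every cell that the algorithm acts on is genuinely pure, and the sign of \textsf{count} correctly reveals membership in $X\setminus Y$ versus $Y\setminus X$.

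The heart of the argument is then a purely combinatorial claim: the peeling process succeeds in removing all $m$ elements if and only if the random $k$-uniform hypergraph on $t$ cells induced by the $m$ elements is \emph{peelable} (contains no nonempty sub-hypergraph in which every cell has degree $\ge 2$), i.e., it contains no ``2-core.'' I would bound the failure probability by a first-moment (union bound) argument over potential obstructions. The standard approach is to consider, for each $\ell$ from $2$ up to $m$, the event that some set $S$ of $\ell$ elements has all $k\ell$ of its hash-incidences landing in cells that are each hit at least twice by $S$; such a set must occupy at most $k\ell/2$ cells. Summing $\binom{m}{\ell}$ over the choices of $S$, times the probability that $\ell$ elements' $k\ell$ independent hash values all fall into a pool of $\le k\ell/2$ designated cells out of $t \ge 2km$, gives a bound of the form $\sum_{\ell\ge 2} \binom{m}{\ell}\bigl(k\ell/(2t)\bigr)^{\text{something like } k\ell/2}$. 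Plugging $t \ge 2km$ makes the ratio $k\ell/(2t) \le \ell/(4m) \le 1/4$, and the exponent grows like $k\ell$, so each term is at most $m^\ell \cdot 4^{-\Theta(k\ell)} = (m \cdot 4^{-\Theta(k)})^\ell$; choosing $k > \lceil\log(m/\epsilon)\rceil+1$ forces $m\cdot 4^{-\Theta(k)} < 1$ and makes the geometric sum at most $\epsilon$.

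I expect the main obstacle to be pinning down the constants in that first-moment estimate so that the clean hypothesis $k > \lceil\log(m/\epsilon)\rceil+1$ and $t \ge 2km$ really suffices — in particular handling the small-$\ell$ terms (where $k\ell/2$ rounding matters) and making sure the ``$\le k\ell/2$ cells'' counting is done without double-counting which cells are the doubly-covered ones. A secondary, more routine obstacle is combining the two failure modes (checksum false positive, and non-peelable hypergraph) and verifying that their union still has probability at most $\epsilon$; this is just another union bound, but it requires that $\lambda \ge k + \lceil\log k\rceil$ be exactly enough to absorb the $O(km)$ checksum tests, which is where that particular hypothesis is consumed. Once these pieces are in place, conditioning on neither bad event occurring, the peeling terminates having output all $m$ elements with correct signs, establishing the theorem.
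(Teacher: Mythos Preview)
The paper does not contain a proof of this theorem at all: Theorem~\ref{thm:upper} is stated with the citation \cite{egs-sos-11} and is simply quoted as an established result from that earlier paper on invertible Bloom filters; the present paper immediately moves on to the privacy-enhanced construction after stating it. So there is no ``paper's own proof'' to compare against here.

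That said, your outline is the standard and essentially correct route to this kind of result, and it is very much in the spirit of the argument one finds in the cited source. The two ingredients you identify---(i) bounding the probability that the \textsf{gSum} checksum accepts a non-pure cell, via a union bound over $O(km)$ tests against a range of size $2^\lambda\ge k\,2^k$, and (ii) bounding the probability that the random $k$-uniform hypergraph on $t\ge 2km$ cells with $m$ hyperedges has a nonempty $2$-core, via a first-moment sum over minimal stuck subconfigurations---are exactly the right decomposition. Your concern about the constants is well placed: getting the clean condition $k>\lceil\log(m/\epsilon)\rceil+1$ to work requires some care in the small-$\ell$ terms and in how one counts the at-most-$\lfloor k\ell/2\rfloor$ occupied cells, and the cited paper handles those details. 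But as far as comparison with \emph{this} paper goes, there is nothing to compare; the theorem is imported, not proved.
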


\subsection{Making an IBF Privacy-Enhanced}
\medskip\noindent
Our strategies for creating a Privacy-enhanced Invertible Bloom Filter
(PIBF) depend on which scenario we are 
working in. So let us consider various scenarios 
for genomic data comparison.
We describe each of the scenarios below assuming that Alice has a
single compressed DNA sequence. 
In this case when Alice holds several such sequences,
we would repeat the steps for Alice described below for each of her sequences.

\subsubsection{Alice and Bob}
In the first scenario we consider,
Bob wants to determine
if there is an entry in Alice's database that matches the genome
he holds, where a match is defined as a symmetric difference that is
below some given threshold, $\tau$.
Bob does not want to reveal his genome to Alice and Alice is willing to reveal
a genome in her database only if its symmetric difference with Bob's
genome is smaller than $\tau$ or close to $\tau$.

In order to create a Privacy-enhanced IBF (PIBF), for this scenario,
Bob modifies the 
\textsf{initialize} method of the IBF data structure, as shown in
Figure~\ref{fig:algs}.
In particular, rather than initialize each field $f$ in each cell of the
IBF, $T$, with $0$, Bob now initializes each field $f$ in each cell 
of $T$ with a random number, $r_f$ (independently for each field of each
cell).
He then creates a copy of this IBF, $T_0$, and stores it away for
later.
Next, starting from $T_0$,
Bob constructs an IBF, $B$, for the compressed version of
his DNA sequence, by calling the \textsf{insert} method for each of
the items in the compressed representation of his string.
He then sends the resulting IBF, $B$, to Alice.
Once Alice has received $B$, she removes each item in the compressed
version of her string by calling
the \textsf{delete} operation, shown in Figure~\ref{fig:algs}, on $B$ for each 
such element. She then sends the result, $C$, to Bob.
Once Bob has received $C$ from Alice, he subtracts $T_0$ from $C$
using the \textsf{subtract} method shown in Figure~\ref{fig:algs},
to get an unmasked IBF, $D$.
Then he calls
the \textsf{listItems} method on $D$, to decode the elements of the symmetric
difference between Alice and Bob's sequences, if
this symmetric difference is small enough for the IBF, $D$, to be
decoded using the \textsf{listItems} method.

In this scenario, Alice can learn nothing from the IBF, $B$, since
every possible value of every field of every cell is equally likely
in $B$, given its initialization.
That is, much in the same way that a one-time pad encryption is perfectly
secure, in the information theoretic sense, so too is the IBF $B$ that is sent
to Alice also perfectly secure.
Alice then performs a series of operations on $B$ and sends it back
to Bob. The degree to which this result, $C$, carries information
that Bob can decode using the \textsf{listItems} method depends on
how many elements remain in $D$ (after Alice has removed the elements
of her set from $B$).
We quantify this privacy protection provided to Alice
below, in Theorem~\ref{thm:lower}, from our section, \ref{sec:analysis},
on the analysis of the PIBF.

\subsubsection{Alice, Bob, and Charles}
In the next scenario we consider,
Bob wants to determine
if there is an entry in Alice's database that matches the genome
he holds, where a match is defined as a symmetric difference that is
below some given threshold, $\tau$, or close to this threshold.
Bob does not want to reveal his genome to Alice and Alice does not
want to reveal any of her data to Bob.
However, Alice is willing to reveal a genome in her database
to a trusted third party, Charles, but only
if its symmetric difference with Bob's
genome is smaller than $\tau$.

Unlike the previous scenario, in this scenario, we utilize 
the functionality of homomorphic encryption.
Fortunately, we don't need a complex general-purpose form of
homomorphic encryption.
Instead, we simply need a cryptosystem having encryption and
decryption functions, $E$ and $D$, that satisfy the following
conditions:
\begin{eqnarray}
D(E(x)*E(y)) &=& x+y \\
D(-1\cdot E(x)) &=& -x
\end{eqnarray}
for some effectively computable operations, ``$*$,'' and ``$\cdot$.''
We refer to these operations as \emph{homomorphic addition} and 
\emph{homomorphic inversion}.
For example, the homomorphic cryptosystem of Paillier~\cite{pailli99}
supports such conditions.
In addition, we
assume that the cryptosystem can be made semantically secure, so that
it is computationally difficult to distinguish $E(0)$, $E(1)$, and $E(x)$,
for $x>1$.

Unlike the previous scenario, in this scenario we do not 
modify 
the \textsf{initialize} method; that is,
Bob initializes this IBF by setting each field
of each cell to $0$.
He then constructs an IBF, $B$, for the compressed version of
his DNA sequence, by calling the \textsf{insert} method for each of
the items in the set representing the compressed version of his
string.
He then encrypts each cell of $B$ using a public-key 
homomorphic function, $E$, using the public key of Charles,
and he sends the result, $E(B)$, to Alice.
Alice then encrypts each element
of the compressed version of a string in her database (using the
public-key of Charles), and 
performs the \textsf{delete} operation, but with each arithmetic
operation now performed using homomorphic addition
and inversion for each such element. She then sends the result, $E(C)$, 
to Charles.
Charles can decrypt $E(C)$, 
but he will only be able to successfully perform the
\textsf{listItems} method, to decode the elements of the symmetric
difference between Alice and Bob's sequences, if
this symmetric difference is small enough for the IBF, $C$, so that
the \textsf{listItems} method succeeds.
The crucial observation with respect to the essential \textsf{delete}
method for 
an IBF, used in this scenario,
is that we can perform all the arithmetic operations 
of these operations just using
addition and inversion.
The methods, \textsf{isMember} and \textsf{listItems}, also require
comparisons, however. So we cannot perform these operations using
homomorphic encryption.
That is why they 
must be performed by the trusted third party, Charles.

\subsubsection{Query Ranges}
In our final scenario, we make a simple observation, which can be
applied to either of the two scenarios just described.
In this scenario,
Bob wants to determine
if there is an entry in Alice's database that matches the genome
he holds, restricted to a range, $R$, 
where a match is defined as a symmetric difference that is
below some given threshold, $\tau$, or close to $\tau$, and is in the range $R$.
Bob does not want to reveal his genome to Alice and Alice is willing to reveal
a genome in her database, restricted to $R$,
to either Bob or Charles (depending on her privacy preference)
only if its symmetric difference with Bob's
genome is smaller than $\tau$ in $R$.

In this scenario, Bob and Alice perform their operations as in the
appropriate version of one of the
above scenarios, but they only consider their compressed DNA
sequences as restricted to $R$.
Thus, in each scenario, we achieve the privacy restrictions with
respect to Bob and Alice, but now restricted to their symmetric
difference in $R$.
The important observation of why this is possible is that restriction
to the range $R$ is just a subset operation, and all the correctness
and privacy quantification, which we discuss next, applies equally
well to subsets as to sets.

\subsection{Analysis}
\label{sec:analysis}
\medskip\noindent
We have already discussed how Bob can achieve perfect privacy, in the
information theoretic sense, by initializing each field of each cell
in his IBF to a random number.
Likewise, in the scenario involving a trusted third party, Charles,
Bob's privacy is protected to the degree that the homomorphic
encryption scheme he uses is vulnerable to a known ciphertext attack.
Note that in all our scenarios that the communication complexity
between the parties is always $O(\tau)$, where $\tau$ is the
threshold of interest for the cardinality of the symmetric difference
between Bob's compressed DNA sequence and each compressed DNA
sequence that Alice compares it to, assuming that we use a constant
number of hash functions in the IBFs.

To see that we also achieve useful privacy restrictions with respect to
Bob and Charles, and how much of Alice's string they learn in our
various scenarios,
we provide the following theorem, which quantifies
the privacy our schemes achieve in terms of how much information
Alice leaks when she sends a differenced IBF to Bob or Charles.
We note that such a theorem is not included in previous work in
invertible Bloom filters, since these previous works
were focused on quantifying when
IBFs can be successfully decoded, not when they achieve this degree
of privacy protection.

\begin{theorem}
\label{thm:lower}
Suppose that $n$ elements are stored in an IBF with $m$ cells 
and $k\ge 2$ hash functions, 
and let $\epsilon>0$ be an arbitrary parameter. 
Suppose in addition that
\[
n\ge 1+(m/k)(\ln m+\ln\ln m+\ln k+\ln 1/\epsilon).
\]
Then with probability at least $1-\epsilon$, 
the decoding algorithm is unable to decode any of the cells of the IBF.
\end{theorem}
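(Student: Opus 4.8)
The plan is to turn the statement into a simple fact about the cell counts and then into a union bound over Binomial point probabilities. First I would pin down exactly when the decoder can make progress. Since this IBF is built from $n$ pure insertions (no deletions), every \textsf{count} field is a nonnegative integer, and the peeling loop of \textsf{listItems} touches a cell only when its \textsf{count} is $1$; conversely, any cell with \textsf{count}$\,=1$ holds a single element $x$, so its \textsf{idSum} equals $x$ and its \textsf{gSum} equals $g(x)=g(\textsf{idSum})$, hence it does get decoded. So ``the decoder decodes no cell'' is precisely the event that no cell has \textsf{count}$\,=1$, and it suffices to prove $\Pr[\exists\text{ a cell of count }1]\le\epsilon$.

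Next I would analyze a single cell $c$. Modeling the hash functions as independent random maps, the set of $k$ distinct cells assigned to any one element contains $c$ with probability $p:=k/m$ by symmetry, independently across the $n$ elements, so $\mathrm{count}_c\sim\mathrm{Bin}(n,p)$ and
\[
\Pr[\mathrm{count}_c=1]=np(1-p)^{n-1}=\frac{nk}{m}\left(1-\frac{k}{m}\right)^{n-1}.
\]
A union bound over the $m$ cells, together with $1-x\le e^{-x}$, then yields
\[
\Pr[\exists\text{ a cell of count }1]\;\le\;nk\left(1-\frac{k}{m}\right)^{n-1}\;\le\;nk\,e^{-k(n-1)/m}.
\]

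Then I would feed in the hypothesis. It states $k(n-1)/m\ge\ln m+\ln\ln m+\ln k+\ln(1/\epsilon)=\ln\!\big(mk\ln m/\epsilon\big)$, so $e^{-k(n-1)/m}\le\epsilon/(mk\ln m)$ and the bound above becomes $n\epsilon/(m\ln m)$, which is at most $\epsilon$ once $n\le m\ln m$. To cover every admissible $n$ at once I would note that $g(n):=nk(1-k/m)^{n-1}$ rises to a peak near $n=m/k$ and is strictly decreasing for $n>m/k$, while the hypothesis (since the bracketed quantity exceeds $1$) already forces $n>m/k$; hence it is enough to bound $g$ at the smallest admissible value $n_0=1+(m/k)\big(\ln m+\ln\ln m+\ln k+\ln(1/\epsilon)\big)$, i.e.\ to check $n_0\le m\ln m$ — equivalently, that $n_0$ solves the transcendental inequality $n-1\ge(m/k)\big(\ln n+\ln k+\ln(1/\epsilon)\big)$. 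This is exactly the role of the ``$+\ln\ln m+\ln k$'' correction terms: they are what make $\ln n_0\le\ln m+\ln\ln m$ and thus close the argument.

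I expect this last step to be the only delicate point — inverting the map $n\mapsto nk\,e^{-k(n-1)/m}$ so that the threshold stated in the theorem matches the one produced by the union bound (the standard ``$\ln\ln$'' trick for inequalities of this shape), together with the monotonicity remark that reduces the whole claim to the single value $n_0$. The reduction to count-$1$ cells, the exact per-cell Binomial probability, and the union bound itself should all be routine.
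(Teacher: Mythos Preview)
Your proposal is correct and follows essentially the same route as the paper: compute the Binomial probability that a fixed cell is singly occupied, apply $1-x\le e^{-x}$ and a union bound over the $m$ cells to get $nk\,e^{-k(n-1)/m}$, plug in the hypothesis to reduce this to $n\epsilon/(m\ln m)$, and then use monotonicity in $n$ to reduce to the threshold value $n_0$ (where $n_0\le m\ln m$ follows from $k\ge 2$). You are in fact slightly more explicit than the paper about two points it asserts without detail --- the reduction of ``no cell decodes'' to ``no cell has count one,'' and the monotonicity justification that lets one assume $n=n_0$.
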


\begin{proof}
Suppose that $n$ is equal to the given bound rather than greater than it.
Consider a single cell $c$ in the IBF. The probability that a particular element occupies $c$ is $k/m$, so the probability that exactly one element occupies it is
\[
n \frac{k}{m} \left(1-\frac{k}{m}\right)^{n-1}
<\frac{nk}{m}\exp\left(-\frac{(n-1)k}{m}\right).
\]
The probability that there exists a cell that is occupied only once is at most $m$ times this quantity,
$$
nk\exp(-(\ln m+\ln\ln m+\ln k+\ln 1/\epsilon)
= \frac{n\epsilon}{m\log m}.$$
For the given value of $n$, and with the assumption that $k\ge 2$, $n\le m\ln m$, so this probability is at most $\epsilon$,
as desired. If $n$ is greater than the given bound rather than being equal to it, the probability of a single-occupancy cell only decreases, so this assumption is without loss of generality.
\end{proof}

A very similar bound was given by 
Erd\H{o}s and R\'enyi~\cite{ErdRen-MTAMKIK-61} for the coupon 
collector's problem in which one must bound the 
number of independent random draws from $m$ elements 
in order to cover each element twice 
(or more generally some fixed number of times).
Their analysis suggests that the number of elements needed to achieve 
a given failure probability should depend only doubly logarithmically 
on $\epsilon$, rather than singly logarithmically as in our analysis. 
However their bound does not directly apply to our problem, because 
the $kn$ cells chosen by the IBF are not independent of each other,
due to the requirement that an element be stored in $k$ distinct cells.

Together, Theorems~\ref{thm:upper} and~\ref{thm:lower}
provide high-assurance quantifications of the privacy enhancement
that can be achieved using our schemes with respect to Alice's data.
In particular, consider the common feature of each 
scenario, where Bob has a query
sequence, $Q$, which is either an entire compressed DNA sequence
or a subsequence specified by a range, $R$, and he wished to find if
it is a good match with a sequence, $Y$, owned by Alice.
Suppose further that he specifies a cardinality, $\tau$, for the
symmetric difference, so that if $|Q\Delta Y|\le \tau$, then he
(or Charles) should learn $Y$.
Finally, suppose he would like a confidence of 99\% for his findings,
so we set $\epsilon=1/100$.
By Theorem~\ref{thm:upper}, Bob should in this case define 
his IBF so that it has $2k\tau$ cells,
where 
\[
k=\lceil\log (\tau/\epsilon)\rceil+1 
= \lceil\log (100\tau)\rceil+1 
\]
is the number of hash
functions used.
For example, if $\tau=100$, then
he should use 15 hash functions and create an IBF with 3000 cells,
which is clearly much less than an array of 3 billion cells he would
need if he were using a scheme whose size depended on the size of the
entire set instead of the cardinality of interest for his set
difference.

In addition, by Theorem~\ref{thm:lower}, Alice can take comfort that
if the number of differences between $Q$
and her set, $Y$, representing her compressed DNA sequence,
is high enough,
then with 99\% assurance, Bob will not be able to decode any elements
from the IBF he receives from her.
In this case, since his IBF has $2k\tau$ cells, then Alice has a 99\%
assurance of privacy so long as her set $Y$ has at least
\[
n\ge 1+ 2\tau(\ln (2k\tau) + \ln\ln (2k\tau) + \ln k + \ln 100)
\]
differences with $Q$.
For example, if $\tau=100$, and Bob
uses 15 hash functions and creates an IBF with 3000 cells,
then Alice will have a 99\% level of confidence Bob is unable to use
the \textsf{listItems} method to learn any element of her set so long
as the number of differences is at least 3461.
This degree of confidence and privacy enhancement is therefore
quite reasonable in our application domain, 
given that DNA sequences have lengths measured in
billions and are likely to have 
compressed cardinalities, with respect to a reference
string, measured in the millions.

\section{Conclusion}
\label{sec:conclusion}
\medskip\noindent
There is a real and growing need for
techniques that can work in conjunction with genomic
compression and storage technologies to answer queries in a way that
simultaneously preserves the integrity and privacy of the data being
queried 
and the proprietary information contained in the queries themselves.
This paper showed how to perform a critical comparison
operation on DNA sequences
in this framework 
using techniques from algorithm design, cryptography, and
bioinformatics, in a scalable way.

A central theme throughout this paper is that effective solutions
that address trust and privacy for genomic data will need to
fully integrate these three areas, so as to
provide protocols that operate
on genomic data in its existing state while also being efficient
enough to work in real-world applications.

There are several areas for future research on this topic.
In particular, once genomic data is stored (in compressed form) in efficient
data structures and one has a means for genomic data
to be queried with high integrity in a privacy-preserving fashion, 
it is natural to then study methods for
performing various useful computations on this data.
For instance, computations which would be of interest include
understanding and predicting protein structures from DNA, 
modeling and understanding metabolic, signaling, and regulatory networks,
and understanding genome evolution.
Such computations clearly go beyond simple genomic comparisons, and
often involve sophisticated machine learning and data mining
algorithms being applied to collections of genomic data.
Thus, if we are to perform these computations in 
a privacy-preserving high-assurance framework, we need new algorithms
that support these computations without sacrificing privacy and integrity.

\subsection*{Acknowledgments}
This research was supported in part by the National Science
Foundation under grants 
0953071 and 1011840.

{\raggedright
\bibliographystyle{abbrv}
\bibliography{%
baldi,%
bloom,%
cgt,%
clone,%
crypto,%
crypto2,%
crypto3,%
data,%
extra,%
genome,%
geom,%
goodrich,%
info,%
k_anonymity,%
math,%
math2,%
paper,%
ref,%
security,%
security2,%
sigproc,%
social}
}

\end{document}